\documentclass{article}
\usepackage{amsmath,amssymb,epsfig,theorem,fullpage,boxedminipage}

\newtheorem{theorem}{Theorem}

\newtheorem{claim}{Claim}

\newtheorem{example}{Example}
\newcommand{\qed}{$\square$}
\newenvironment{proof}{\noindent{\em Proof:}}{\hfill \qed \medskip}
\makeatletter
\renewcommand{\section}{\@startsection{section}{1}{0pt}{-12pt}{5pt}{\large\bf}}
\renewcommand{\subsection}{\vspace*{-.1in}\@startsection{subsection}{2}{0pt}{-12pt}{-5pt}{\normalsize\bf}}
\makeatother

\title{Sponsored Search Auctions with Markovian Users}

\author{Gagan Aggarwal\thanks{Google, Inc., 1600 Amphitheatre Pkwy, Mountain View, CA, 94043.  Email: {\tt gagana@google.com}}
  \and Jon Feldman\thanks{Google, Inc., 76 Ninth Avenue, 4th Floor, New York, NY, 10011.   Email: {\tt jonfeld@google.com}}
  \and S. Muthukrishnan\thanks{Google, Inc., 76 Ninth Avenue, 4th Floor, New York, NY, 10011.   Email: {\tt muthu@google.com}}
  \and Martin P\'al\thanks{Google, Inc., 76 Ninth Avenue, 4th Floor, New York, NY, 10011.   Email: {\tt mpal@google.com}}
}

\begin{document}

\maketitle

\newcommand{\ctr}{p}
\newcommand{\pn}{\alpha}
\newcommand{\pc}{\beta}
\newcommand{\todo}[1]{{\bf **TODO: #1**}}
\newcommand{\remove}[1]{}
\newcommand{\opt}{\text{OPT}}
\newcommand{\argmax}{\arg \max}

\newcommand{\one}{\alpha}
\newcommand{\two}{\beta}
\newcommand{\three}{\gamma}
\newcommand{\four}{\delta}
\newcommand{\five}{\zeta}
\newcommand{\six}{\eta}

\newcommand{\bidders}{{\cal B}}
\newcommand{\aecpm}{a}
\newcommand{\eff}{e}

\bigskip

\begin{abstract}
Sponsored search involves running an auction among advertisers who bid
in order to have their ad shown next to search results for specific
keywords.  Currently, the most popular auction for sponsored search is
the ``Generalized Second Price'' (GSP) auction in which advertisers
are assigned to slots in the decreasing order of their {\em score},
which is defined as the product of their bid and click-through rate.
In the past few years, there has been significant research on the
game-theoretic issues that arise in an advertiser's interaction with
the mechanism as well as possible redesigns of the mechanism, but this
ranking order has remained standard.

From a search engine's perspective, the fundamental question is: what
is the best assignment of advertisers to slots? Here ``best'' could
mean ``maximizing user satisfaction,'' ``most efficient,''
``revenue-maximizing,'' ``simplest to interact with,'' or a
combination of these.  To answer this question we need to understand
the behavior of a search engine user when she sees the displayed ads,
since that defines the {\em commodity} the advertisers are bidding on,
and its value.  Most prior work has assumed that the probability of a
user clicking on an ad is independent of the other ads shown on the
page.

We propose a simple Markovian user model that does not make this
assumption. We then present an algorithm to determine the most
efficient assignment under this model, which turns out to be different
than that of GSP.  A truthful auction then follows from an application
of the Vickrey-Clarke-Groves (VCG) mechanism.  Further, we show that
our assignment has many of the desirable properties of GSP that makes
bidding intuitive. At the technical core of our result are a number of
insights about the structure of the optimal assignment.
\end{abstract}

\section{Introduction}

Targeted advertisements on search queries is an increasingly important
advertising medium, attracting large numbers of advertisers and users.
When a user poses a query, the search engine returns search results
together with advertisements that are placed into positions, usually
arranged linearly down the page, top to bottom.  On most major search
engines, the assignment of ads to positions is determined by an
auction among all advertisers who placed a bid on a keyword that
matches the query. The user might click on one or more of the ads, in
which case (in the pay-per-click model) the advertiser receiving the
click pays the search engine a price determined by the auction.

In the past few years, the sponsored search model has been highly
successful commercially, and the research community is attempting to
understand the underlying dynamics, explain the behavior of the market
and to improve the auction algorithms.  
The most common auction being run today is the {\em Generalized Second
Price} (GSP) auction: Each bidder $i$ submits a bid $b_i$ stating the
maximum amount they are willing to pay for a click, and the bidders
are placed in descending order of $b_i \ctr_i$, where $\ctr_i$ is what
is called the {\em click-through-rate} of advertiser $i$; i.e., the
probability that a user will click on the ad, given that the user
looks at it. Much of previous research on sponsored search auctions
has fixed this sort order, and focused on understanding the
implications of different pricing schemes, assuming strategic behavior
on the part of the advertisers. We now know something about GSP's
equilibrium properties~\cite{EOS,Varian,AGM}, alternative pricing that
will make it truthful~\cite{AGM}, and to some extent, impact on the
revenue in principle~\cite{EOS} and via simulations~\cite{SL}.

However, by fixing this sort order, prior work exogenizes an important
third party in sponsored search; i.e., the {\em search engine user}.
Unfortunately, there is very little guidance on this in the
literature, even though the user's behavior is the essential
ingredient that defines
{\em the commodity} the advertisers are bidding on, and its value.  We
suggest a different framework for principled understanding of
sponsored search auctions:
\begin{itemize}
\item Define a suitable probabilistic model for search engine user
behavior upon being presented the ads.
\item Once this model is fixed, ask the traditional mechanism design
questions of how do assign the ads to slots, and how to price them.
\item Analyze the given mechanism from the perspective of the bidders
(e.g., strategies) and the search engine (e.g., user satisfaction,
efficiency and revenue).
\end{itemize}

There are certain well-accepted observations about the user's
interaction with the sponsored search ads that should inform the
model: {\bf (i)} The higher the ad is on the page, the more clicks it
gets.  {\bf (ii)} The ``better'' the ad is, the more clicks it gets,
where the ``goodness'' of an ad is related to the inherent
quality of the ad, and how well it matches the user's query.  These
properties govern not only how the auction is run but also how
advertisers think about their bidding strategy (they prefer to appear
higher and get more clicks).  Thus it is important for an auction to
have what we call {\em intuitive bidding}: a higher bid translates to
a higher position and more clicks.

In this paper, we propose a natural Markov model for user clicks,
taking the above observations into account and design an algorithm to
determine an optimal assignment of ads to positions in terms of
economic efficiency. Together with VCG pricing, this gives a truthful
auction.  We further show that the optimal assignment under this model
has certain monotonicity properties that allow for intuitive
bidding. In what follows, we will describe our technical contributions
in more detail.

\paragraph{Modeling the Search Engine User}
Previous work on sponsored search has (implicitly) modeled the user
using two types of parameters: ad-specific click-through rates
$\ctr_i$ and position-specific visibility factors
$\pn_j$.
There are some intuitive user behavior models that express overall
click-through probabilities in terms of these parameters.
One possibility is ``for each position $j$ {\em independently}, the
user looks at the ad $i$ in that position with probability $\pn_j$
then clicks on the ad with probability $\ctr_i$.''  Alternatively:
``The user picks a {\em single} position according to the distribution
implied by the $\pn_j$'s, and then clicks on the ad $i$ in that
position with probability $\ctr_i$.''  Under both these models, it
follows that the probability of an ad $i$ in position $j$ receiving a
click is equal to $\ctr_i \pn_j$, which is the so-called {\em
separability} assumption~\cite{AGM}.  From separability it follows
that GSP ordering of ads will be suitable, because GSP ordering
maximizes the total advertiser value on the page.

In both these models there is no reason {\em a priori}
that the position factors $\pn_j$ should be decreasing; this is simply
imposed because it makes sense, and it is verifiable empirically. 
Also, both suggested models assume that the probability of an ad getting
clicked is independent of {\em other ads} that appear with it on the
page, an assumption made without much justification.  It is hard to
imagine that seeing an ad, perhaps followed by a click, has no effect
on the subsequent behavior of the user.  

In designing a user model, we would like to have the monotonicity of
the positions arise naturally.  Also, each ad should have parameters
dictating their effect on the user both in terms of clicking on that
ad, as well as looking at other ads.  We propose a model based on a
user who starts to scan the list of ads from the top, and makes
decisions (about whether to click, continue scanning, or give up
altogether) based on what he sees.  More specifically, we model the
user as the following Markov process: ``Begin scanning the ads from
the top down.  When position $j$ is reached, click on the ad $i$ with
probability $\ctr_i$.  Continue scanning with probability $q_i$.''  In
this model, if we try to write the click probability of an ad $i$ in
position $j$ as $\ctr_i \pn_j$, we get that $\pn_j = \Pi_{i' \in A}
q_{i'}$, where $A$ is the set of ads placed above\footnote{Throughout the
paper, we will often refer to a position or an ad being ``higher'' or
``above'' another position or ad; this means that it is earlier on the
list, and is looked at first by the user.}
 position $j$.  Thus
the ``position factor'' in the click probability decreases with
position, and does so naturally from the model.  Also note that we do
not have separability anymore, since $\alpha_j$ depends on which ads
are above position $j$. Consequently, it can be shown that GSP
assignment of ads is no longer the most efficient.

\paragraph{Auction with Markovian users}
Given this new user model, we can now ask what the best assignment is
of ads to slots.  We will study the most efficient assignment; i.e.,
the one that maximizes total advertiser value derived from user
clicks.  It turns out that the structure of this assignment is
different than that of GSP, and indeed is more sophisticated than any
simple ranking.  The presence of the $q_i$'s requires a delicate
tradeoff between the click probability of an ad and its effect on the
slots below it.  In this paper, we identify certain structural
properties of the optimal assignment and use them to find such an
optimal assignment efficiently, not only in polynomial time, but in
near-linear time.  Given this algorithm, a natural candidate for
pricing is VCG~\cite{V,C,G}, which is clearly truthful in this
setting.

\paragraph{Intuitive Bidding}
One of the reasons why GSP is successful is perhaps because bidding
strategy is intuitive: Under GSP ranking, if an advertiser bids more,
they get to a higher position, and consequently, if they bid more,
their click probability increases.  Now that we have defined a more
sophisticated assignment function, even though VCG pricing is truthful,
the auction still may not have these intuitive properties.  Our main technical
result is to show that in our model, if a mechanism uses the most
efficient assignment, indeed position and click probabilities are
monotonic in an ad's bid (with all other bids fixed), thus preserving
this important property.  While not surprising, position-monotonicity
turns out to be rather involved to prove, requiring some delicate
combinatorial arguments, and insights into the optimal substructure of
bidder assignments.

\bigskip
In summary, we approach sponsored search auctions as a three party
process by modeling the behavior of users first and then designing
suitable mechanisms to affect the game theory between the advertiser
and the search engine.  Our work sheds some light on the intricate
connection between the user models and the mechanisms; for example,
the sort order of GSP that is currently popular (sort by
$b_i\ctr_i$) is not optimal under the Markov user model we propose here.
More powerful models will be of great interest, such as making the
continuation probability $q_i$ a function of position as well, endogenizing
the actions of the user as they navigate on the landing page, etc.  We
leave it open to design truthful auctions under such extended models.

\subsection{Related Work} Sponsored search has been an active area of research in the last
several years after the early papers explored the foundational
models~\cite{EOS,AGM,Varian, LPSV}.  In general, the motivation for
the work that followed is that sponsored search in practice is much more complex
than as described by the first models.  Some papers have taken on the
effect of advertiser budgets~\cite{BCIMS, MSVV, AMT}, as well as
analyzing bidder strategy and dynamics~\cite{BCEIJM, RW, CDEGHKMS,
FMPS, WVLL, VR, LQ}.  There have also been several papers offering
extensions to GSP, or entirely new models and mechanisms~\cite{afm,
Lahaie, MNS, FMNP, mps, MS, AG}.

\subsection{Outline}  In Section~\ref{sec:model} we define our model
formally.  In Section~\ref{sec:props}, we establish several
properties of optimal assignments in this model, including our main technical result
that position and click probability will be monotone in bid and match
our intuition.  We give our algorithm for finding an optimal
assignment in Section~\ref{sec:algorithm} which gives the truthful
auction via VCG pricing, and conclude in
Section~\ref{sec:conclusions}.

\section{Markov User Click Model}
\label{sec:model}

We consider a sponsored search auction with $n$ bidders $\bidders =
\{1, \dots, n\}$ and $k$ positions.  We will also refer to ``ad $i$,''
meaning the advertisement submitted by bidder $i$.  Each bidder $i \in
\bidders$ has two parameters, $\ctr_i$ and $q_i$.  The
click-through-rate $\ctr_i$ is the probability that a user will click
on ad $i$, given that they {\em look} at it.  The continuation
probability $q_i$ is the probability that a user will look at the next
ad in a list, given that they look at ad $i$.  

Each bidder submits a bid $b_i$ to the auction, representing the
amount that they value a click.  The quantity $\ctr_i b_i$ then
represents the value of an ``impression,'' i.e., how much they value a
user looking at their ad.  This is commonly referred to as their
``ecpm.''\footnote{The acronym ecpm stands for ``expected cost per
thousand'' impressions, where M is the roman numeral for one
thousand. We will drop the factor of one thousand and refer to $\ctr_i
b_i$ as the ``ecpm.''} Throughout, we will use the notation $\eff_i = \ctr_i
b_i$ for convenience.

Given an assignment $(x_1, \dots, x_k)$ of bidders to the $k$
positions, the user looks at the first ad $x_1$, clicks on it with
probability $\ctr_{x_1}$, and then continues looking with probability
$q_{x_1}$.\footnote{The click event and the continuation event could
in principle have some correlation, and all our results will still hold.  However since we
only consider expected value, we never use this correlation
explicitly in our analysis.}
  This is repeated with the second bidder, etc., until the
last ad is reached, or some continuation test has failed.  Thus the
overall expected value of the assignment to the bidders is
$$
\eff_{x_1} + q_{x_1} (\eff_{x_2} + q_{x_2}(\eff_{x_3} + q_{x_3}(\dots q_{x_{n'-1}}(\eff_{x_n})))).
$$

The goal of the auctioneer is to compute an assignment of ads to
positions that maximizes the overall expected value.  Given this assignment,
prices can be computed using VCG~\cite{V,C,G}: for each assigned bidder
we compute the change in others' value if that bidder were to
disappear.  This assures truthful reporting of bids under a
profit-maximizing utility function.

\section{Properties of Optimal Assignments}
\label{sec:props}

We will start analyzing some basic properties of the optimal
assignment.  Our insights will allow us to give our main results
regarding monotonicity of position and click probability, as well as
an efficient algorithm for finding this assignment.  

\subsection{Adjusted ECPM}
It turns out that the quantity $\eff_i / (1-q_i)$, which we will refer to
as the ``adjusted ecpm (a-ecpm),'' plays a central role in this model.
Intuitively, this quantity is the impression value adjusted by the
negative effect this bid has on the ads below it.  We use $\aecpm_i =
\eff_i / (1-q_i)$ for convenience.  The following theorem tells us how to
assign a set of $k$ selected ads to the $k$ positions:

\begin{theorem}
In the most efficient assignment, the ads that are placed are sorted in decreasing order of
adjusted ecpm 
$
\aecpm_i = \eff_i / (1 - q_i)
$.
\label{thm:rank}
\end{theorem}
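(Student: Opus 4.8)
The plan is to establish the ranking by a local exchange argument on adjacent positions. I would first fix the set $S$ of $k$ ads that get placed, and consider any ordering of $S$ that maximizes the expected value. The claim then follows if I can show that in such an optimal ordering every pair of adjacent ads appears in decreasing order of $\aecpm$, since a standard bubble-sort / transitivity argument forces the entire sequence to be sorted by $\aecpm$ once all adjacent pairs are.

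To analyze an adjacent swap, suppose ads $i$ and $i'$ occupy two consecutive positions, with $i$ immediately above $i'$. I would decompose the total expected value into three pieces: the value accrued from ads strictly above the pair, the probability $P$ (a product of the $q$-values of those higher ads) that the user ever reaches the pair, and the tail value $V$ contributed by the ads strictly below the pair. The key observation is that none of these three quantities changes when I swap $i$ and $i'$, so the total value differs only through the term $P \cdot (\eff_i + q_i \eff_{i'} + q_i q_{i'} V)$ for the order $(i,i')$ versus $P \cdot (\eff_{i'} + q_{i'} \eff_i + q_{i'} q_i V)$ for the order $(i',i)$. Because the tail is reached only after the user continues past both ads, it is multiplied by $q_i q_{i'}$ in either order, so that term cancels in the comparison.

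What remains is to compare $\eff_i + q_i \eff_{i'}$ against $\eff_{i'} + q_{i'} \eff_i$: keeping $i$ above $i'$ is weakly better exactly when $\eff_i + q_i \eff_{i'} \ge \eff_{i'} + q_{i'} \eff_i$, which rearranges to $\eff_i(1 - q_{i'}) \ge \eff_{i'}(1 - q_i)$, i.e. $\eff_i/(1-q_i) \ge \eff_{i'}/(1-q_{i'})$, namely $\aecpm_i \ge \aecpm_{i'}$. The main thing to get right (rather than a genuine obstacle) is the clean decomposition that isolates the adjacent pair and makes the tail contribution factor out identically in both orderings; once that cancellation is secured the whole comparison collapses to this two-term inequality. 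I would also note that the argument is insensitive to the actual positions of the pair, and that a strict inequality in $\aecpm$ forces a strict preference, giving the sorted order uniquely up to ties.
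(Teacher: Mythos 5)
Your proposal is correct and takes essentially the same route as the paper: an adjacent-transposition exchange argument in which the tail value below the pair is multiplied by $q_i q_{i'}$ in either order and cancels, reducing the comparison to $\eff_i(1-q_{i'}) \ge \eff_{i'}(1-q_i)$, i.e.\ $\aecpm_i \ge \aecpm_{i'}$. The only cosmetic difference is that you verify every adjacent pair is ordered and then invoke transitivity, whereas the paper argues by contradiction from a single out-of-order adjacent pair.
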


\newcommand{\effnum}{{\hat{\eff}}}

\begin{proof}
Suppose not. Then in the ranking there are two consecutive ads $i$ and $i'$ in positions $j$ and $j+1$ where
\begin{equation}
\label{eq:rank}
\frac{\eff_i}{1 - q_i} < \frac{\eff_{i'}}{1 - q_{i'}}.
\end{equation}
The contribution of positions $j \dots n$ to the efficiency of the ranking (given that position $j$ is reached) is
$$
\eff_i + q_i (\eff_{i'} + q_{i'} \effnum)
$$  
where $\effnum$ is the efficiency of positions $j+2 \dots k$ given that position $j+2$ is reached.  If $i$ and $i'$ are switched, then the contribution would change to 
$$
\eff_{i'} + q_{i'} (\eff_{i} + q_{i} \effnum),
$$  
and nothing else would change.  So since the former is the most efficient assignment, we have
$$
\eff_i + q_i (\eff_{i'} + q_{i'} \effnum)  >  \eff_{i'} + q_{i'} (\eff_{i} + q_{i} \effnum) 
$$
and so 
$$
\eff_i  (1 - q_{i'}) \geq \eff_{i'}  (1 - q_{i}).
$$
This contradicts~\eqref{eq:rank}.
\end{proof}

While this theorem tells us how to sort the ads selected, it does not
tell us {\em which} $k$ ads to select.  One is tempted to say that
choosing the top $k$ ads by a-ecpm would do the trick; however
the following example proves otherwise:

\begin{example}
\label{ex:1}
Suppose we have three bidders and two slots, and the bidders have the following parameters:

\smallskip
\begin{center}
\begin{tabular}{rrrr}
Bidder & $\eff_i$  & $q_i$ & $\aecpm_i = \eff_i / (1-q_i)$ \\ \hline
1      & \$1    & .75   & 4      \\ 
2      & \$2    & .2    & 2.5    \\ 
3      & \$0.85 & .8    & 4.25   \\ 
\end{tabular}
\end{center}
\smallskip

Let's consider some possible assignments and their efficiency.  If we
use simple ranking by ecpm $\eff_i$, we get the assignment $(2,1)$, which has
efficiency $\$2 + .2 (\$1) = \$2.20$.  If we use simple ranking by a-ecpm $a_i$ we
get the assignment $(3,1)$ with efficiency $\$0.85 + .8 (\$1) =
\$1.65$.  It turns out that the optimal assignment is $(1,2)$ with
efficiency $\$1 + .75 (\$2) = \$2.50$.  The assigned bidders are
ordered by a-ecpm in the assignment, but are not the top 2 bidders by
a-ecpm.

Now suppose we have the same set of bidders, but now we have three
slots.  The optimal assignment in this case is $(3,1,2)$; note how
bidder 3 goes from being unassigned to being assigned the first
position.
\end{example}

\subsection{Notation for Assignments}

Theorem~\ref{thm:rank} implies that the optimal assignment can be
described by the set of assigned bidders, since this set will always
be sorted by a-ecpm.
For a set $X$ of bidders, let $\eff(X)$ be the efficiency of an
assignment that places only the set $X$, regardless of the number
of bidders in $X$; i.e., if $X = \{1, 2, 3, \dots, n'\}$ and wlog is
sorted by a-ecpm, then
$$
\eff(X) = \eff_1 + q_1 (\eff_2 + q_2(\eff_3 + q_3(\dots q_{n'-1}(\eff_{n'})))).
$$  
Similarly, let $q(X)$ be the overall continuation probability of the
set $X$ of ads: $q(X) = \Pi_{i' \in X} q_{i'}$.
If $X = \emptyset$ we have $\eff(X) = 0$ and $q(X) = 1$.  

Throughout the paper, we will also make use of the notation $(A, x, B, C, y,
\dots)$ for a particular assignment, where uppercase letters denote
sets of bidders and lowercase letters denote single bidders.  This
denotes a solution where the order of the bidders is as given, and
where the order within a set is by a-ecpm (breaking ties using
a lexicographic order on bidders). So in this example, the solution
would put all the bidders of $A$ first in a-ecpm order,
followed by bidder $x$, followed by the bidders of $B$ in a-ecpm order, etc.  Note that this notation could express suboptimal
solutions (for the particular set of bidders) if the elements are not
in order of ecpm.  We use $\eff(\cdot)$ and $q(\cdot)$ to denote the
efficiency and continuation probability of such solutions.

\subsection{Bidder Dominance}

In classical sponsored search with simple ranking, a bidder $j$ can
dominate another bidder $i$ by having higher ecpm; i.e., bidder $j$
will always appear whenever $i$ does, and in a higher position.
Example~\ref{ex:1} above shows that having a higher ecpm (or a-ecpm) does not allow a bidder to dominate another bidder in our new
model.  However, we show in this section that if she has higher ecpm
{\em and} a-ecpm, then this does suffice.  This is not only
interesting in its own right, it is essential for proving deeper
structural properties in later sections.

\begin{theorem} 
\label{lemma:sub}
For all bidders $i$ in an optimal assignment, if some bidder $j$ is not in the
assignment, and $\aecpm_j \geq \aecpm_i$ and $\eff_j
\geq \eff_i$, then we may substitute $j$ for $i$, and the assignment is
no worse.
\end{theorem}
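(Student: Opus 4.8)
The plan is to reduce the claim to a single in-place exchange computation and then invoke Theorem~\ref{thm:rank} to re-sort. By Theorem~\ref{thm:rank} I may write the optimal assignment as $(A, i, B)$, where $A$ is the sorted block of bidders above $i$ and $B$ the sorted block below; crucially, every bidder of $B$ has a-ecpm at most $\aecpm_i$. First I would perform the substitution \emph{in place}, replacing $i$ by $j$ in its current slot to get $(A, j, B)$, and compute the exact change in efficiency. Factoring out the value accumulated down to $i$'s slot, the difference is
$$
\eff(A, j, B) - \eff(A, i, B) = q(A)\,\big[(\eff_j - \eff_i) + (q_j - q_i)\,\eff(B)\big],
$$
so since $q(A) \ge 0$ it suffices to show the bracketed quantity is nonnegative. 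Since $(A, j, B)$ uses the bidder set $A \cup \{j\} \cup B$ but need not be sorted by a-ecpm, Theorem~\ref{thm:rank} then guarantees that re-sorting this set can only increase efficiency, so the genuine optimal assignment on the new set is no worse than $(A,i,B)$.

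The heart of the argument is bounding the tail term $\eff(B)$. I would establish the structural fact that any set $S$ all of whose bidders have a-ecpm at most $c$ satisfies $\eff(S) \le c\,(1 - q(S))$; this follows by a one-line induction, peeling off the top bidder $s$ (for which $\eff_s \le c\,(1-q_s)$) and using $q(S) = q_s\, q(S')$. Applying this with $c = \aecpm_i$ yields $\eff(B) \le \aecpm_i\,(1 - q(B)) \le \aecpm_i = \eff_i/(1-q_i)$.

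To finish, I would view the bracketed expression as an affine function of $t := \eff(B)$ on the interval $[\,0,\ \eff_i/(1-q_i)\,]$. An affine function on an interval attains its minimum at an endpoint, so it is enough to check the two endpoints. At $t = 0$ the value is $\eff_j - \eff_i \ge 0$ by hypothesis. At $t = \eff_i/(1-q_i)$ the value simplifies to $\eff_j - \eff_i\,\tfrac{1 - q_j}{1 - q_i}$, which is nonnegative exactly when $\aecpm_j \ge \aecpm_i$, again a hypothesis. Hence the bracket is nonnegative throughout, the in-place substitution is no worse, and re-sorting completes the proof.

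I expect the main obstacle to be the case $q_j < q_i$, where the substitution \emph{hurts} the bidders of $B$ (they are reached with smaller probability). There the in-place difference is decreasing in $\eff(B)$, so one must argue the loss to the tail cannot exceed the gain $\eff_j - \eff_i$. This is precisely where both the a-ecpm bound $\eff(B) \le \eff_i/(1-q_i)$ and the hypothesis $\aecpm_j \ge \aecpm_i$ are needed, and they cancel exactly at the worst-case endpoint. The key simplification is recognizing that $\eff(B)$ is the only quantity requiring control (the block $A$ drops out entirely via the factor $q(A)$), and that its magnitude is governed by the a-ecpm of the bidders in $B$.
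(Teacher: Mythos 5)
Your proof is correct, and it reaches the paper's conclusion by a genuinely different route at the one substantive step. Both arguments reduce to the same in-place exchange and the same quantitative facts: the difference $\eff(A,j,B)-\eff(A,i,B)=q(A)[(\eff_j-\eff_i)+(q_j-q_i)\eff(B)]$ and the tail bound $\eff(B)\le \eff_i/(1-q_i)$. The paper obtains that tail bound in one line from \emph{optimality}: deleting $i$ yields the feasible assignment $(A,B)$, so $\eff(A)+q(A)\eff(B)\le \eff(A)+q(A)(\eff_i+q_i\eff(B))$, i.e.\ $\eff(B)\le \aecpm_i$; it then finishes with a case split on whether $q_j\ge q_i$ or $q_i>q_j$. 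You instead obtain the bound from \emph{sortedness} alone, via the inductive fact that any set $S$ whose members all have a-ecpm at most $c$ satisfies $\eff(S)\le c\,(1-q(S))\le c$, applied with $c=\aecpm_i$ to the suffix $B$; and your affine-in-$\eff(B)$ endpoint check replaces the paper's case analysis (the two endpoints correspond exactly to the paper's two hypotheses, $\eff_j\ge\eff_i$ and $\aecpm_j\ge\aecpm_i$). The trade-off: the paper's derivation is shorter, while yours is slightly stronger and more self-contained --- it shows the substitution is safe in \emph{any} a-ecpm-sorted assignment whose suffix below $i$ respects the a-ecpm bound, without invoking optimality of the whole assignment, and your structural inequality $\eff(S)\le c\,(1-q(S))$ is a clean reusable fact. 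Your closing remark that re-sorting $A\cup\{j\}\cup B$ afterwards can only help is harmless but unnecessary; the theorem (and the paper's proof) only asserts that the in-place substitution is no worse.
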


\begin{proof}
Consider some bidder $i$ that appears in an optimal assignment, and some $j$ that does not appear in the assignment, such that  $\eff_j \geq \eff_i$ and
$
\aecpm_j \geq \aecpm_i
$.
Let $(X,i,Y)$ be the optimal assignment, where $X$ is the sequence of bidders above $i$ and $Y$ is the sequence below $i$. 
The efficiency of the assignment $(X,i,Y)$ is 
$
\eff(X) + q(X) (\eff_i + q_i \eff(Y)).
$
The efficiency of the assignment $(X,j,Y)$ is 
$
\eff(X) + q(X) (\eff_j + q_j \eff(Y)).
$
Suppose $q_j \geq q_i$; then clearly $(X,j,Y)$ is as efficient as $(X,i,Y)$ since $\eff_j \geq \eff_i$, and the theorem is proven.  Thus we assume that $q_i > q_j$ for the remainder of the proof.
Note that $\aecpm_j \geq \aecpm_i$ is equivalent to:
\begin{eqnarray}
\eff_j - \eff_i & \geq & \frac{\eff_i(q_i - q_j)}{1-q_i} \label{eq:manip}
\end{eqnarray}
Now consider the assignment $(X, Y)$, with efficiency $\eff(X) + q(X) \eff(Y)$.  Since $(X, i, Y)$ is optimal, we get 
$$
\eff(X) + q(X) \eff(Y) \leq \eff(X) + q(X) (\eff_i + q_i \eff(Y)),
$$ i.e., 
$
\eff(Y) \leq \eff_i / (1 - q_i)
$.
Combining this with~\eqref{eq:manip}, and using the fact that $q_i > q_j$, we get
$
\eff_j - \eff_i \geq \eff(Y) (q_i - q_j)
$
which can be rewritten as
$$
\eff_j + q_j \eff(Y) \geq \eff_i + q_i \eff(Y).
$$
This implies that the solution $(X, j, Y)$ is as efficient as $(X, i, Y)$. 
\end{proof}

\subsection{Subset Substructure in Optimal Assignments}
In this section we give a theorem that shows some
subset structure between optimal assignments to different numbers of slots.
This theorem is used to prove position monotonicity, and is an essential ingredient of
our algorithm.  Let $\opt(C,j)$ denote the set of all optimal
solutions for filling $j$ positions with bidders from the set $C$.

\begin{theorem} 
\label{thm:subsets}
Let $j \in \{1, \dots, k\}$ be some number of positions, and let $C$
be an arbitrary set of bidders.  Then, for all $S \in \opt(C,j-1)$,
there is some $S' \in \opt(C,j)$ where $S' \supset S$.
\end{theorem}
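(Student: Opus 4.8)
The plan is to prove the equivalent \emph{augmentation} statement. Since $|S| = j-1$ and we seek $S' \supset S$ with $|S'| = j$, any such $S'$ must have the form $S \cup \{c\}$ for a single bidder $c \in C \setminus S$. Hence Theorem~\ref{thm:subsets} is equivalent to: for every $S \in \opt(C, j-1)$ there is a bidder $c$ with $S \cup \{c\} \in \opt(C, j)$. I would prove this by an extremal exchange argument: fix $S \in \opt(C, j-1)$, choose $T \in \opt(C, j)$ maximizing $|S \cap T|$, and assume toward a contradiction that $S \not\subseteq T$. By Theorem~\ref{thm:rank} both $S$ and $T$ are sorted by a-ecpm, so each is determined by its set; a simple count gives $|T \setminus S| = |S \setminus T| + 1$, so $S \setminus T$ is nonempty and $T \setminus S$ has at least two bidders. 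The goal is to produce $T' \in \opt(C, j)$ with $|S \cap T'| > |S \cap T|$, contradicting the choice of $T$.

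The main tool I would establish first is a clean formula for the marginal value of inserting a bidder. For a set $X$ sorted by a-ecpm, writing $X_{>c}$ and $X_{<c}$ for the bidders of $X$ ranked above and below $c$, the recursive definition of $\eff(\cdot)$ gives $\eff(X \cup \{c\}) - \eff(X) = q(X_{>c})\,(1 - q_c)\,(\aecpm_c - \eff(X_{<c}))$. This isolates the effect of a single insertion into two independent factors: a prefix continuation term and the comparison of $c$'s a-ecpm against the efficiency of the bidders below it. Using it, a swap that removes $b \in T \setminus S$ and inserts $a \in S \setminus T$ changes $\eff(T)$ by the marginal value of $a$ minus that of $b$, both computed in the common context $T \setminus \{b\}$; showing this difference is nonnegative certifies that $T' = (T \setminus \{b\}) \cup \{a\}$ is again optimal while strictly increasing the overlap with $S$.

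To find such a pair $(a,b)$ I would first dispatch the easy case with Theorem~\ref{lemma:sub}: if some $a \in S \setminus T$ dominates some $b \in T \setminus S$ (that is, $\aecpm_a \geq \aecpm_b$ and $\eff_a \geq \eff_b$), then substituting $a$ for $b$ in $T$ is no worse, and we are done. The remaining, hard case is when no bidder of $S \setminus T$ dominates any bidder of $T \setminus S$; here I would select the pair using the a-ecpm order (for instance, a type-switching adjacent pair among the differing bidders, which must exist since both difference sets are nonempty), and then combine the optimality of $T$ against swaps of size $j$ with the optimality of $S$ against swaps of size $j-1$, feeding both into the marginal-value formula to force the swap to be value-neutral.

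The hard part will be exactly this last step. Because $\eff(\cdot)$ is a nested product rather than a linear function, the value change of a swap is entangled with its position through the prefix product $q(X_{>c})$ and the suffix efficiency $\eff(X_{<c})$, so gains do not telescope the way edge weights do in a matroid base-exchange. The delicate point is that the two optimality inequalities I want to combine live in \emph{different} contexts (the sets $T \setminus \{b\}$ and $S \setminus \{a\}$ differ), so I must control how $\eff(X_{<c})$ and $q(X_{>c})$ shift between them; making the right extremal choice of $(a,b)$ so that these shifts line up — and hence that the optimalities of $S$ and $T$ can be played against each other — is where the combinatorial work of the proof concentrates.
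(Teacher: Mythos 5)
There is a genuine gap: your proposal sets up the right scaffolding but stops exactly where the proof actually lives. Your extremal choice (pick $T \in \opt(C,j)$ maximizing overlap with $S$), your insertion identity $\eff(X \cup \{c\}) - \eff(X) = q(X_{>c})(1-q_c)(\aecpm_c - \eff(X_{<c}))$ (this is precisely the paper's identity~\eqref{eq:identity}), and your use of Theorem~\ref{lemma:sub} all match ingredients of the paper's argument. But the ``hard case'' --- where no bidder of $S \setminus T$ dominates any bidder of $T \setminus S$ --- is the entire content of the theorem, and you only describe a hope (``force the swap to be value-neutral'') while correctly diagnosing why it is hard (the optimality inequalities for $S$ and $T$ live in different contexts). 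Diagnosing the obstruction is not the same as overcoming it; as written, the proof does not exist.

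Moreover, the specific plan is aimed at the wrong intermediate claim. The paper does \emph{not} produce an optimality-preserving single swap of an adjacent pair; its optimal superset is $S \cup \{z\}$, where $z$ is the \emph{lowest} bidder of $S' \setminus S$, and this set differs from $S'$ by a \emph{block} exchange: the whole set $A$ (bidders of $S$ above the top bidder $x$ of $S'$) replaces the whole set $X$ (the other bidders of $S' \setminus S$). Two ideas make that block exchange certifiable, and both are absent from your proposal. First, the paper inducts on $j$: the inductive hypothesis, applied to the subproblem on bidders with a-ecpm below $x$, forces the lower portions of $S$ and $S'$ to nest ($Q' \supset Q$), which is what aligns the two contexts you worry about --- after this alignment, the difference $\eff(S \cup \{z\}) - \eff(S')$ collapses, via identity~\eqref{eq:identity}, to the single product $q(B')q(D)\,(q(A)-q(X))\,(\eff(z,E)-\eff(E))$. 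Second, the sign of $q(A) - q(X)$ is obtained by applying Theorem~\ref{lemma:sub} \emph{pairwise} to every $y \in A$, $y' \in X$ (maximality of overlap forbids substituting $y$ for $y'$, forcing $\eff_{y'} > \eff_y$, hence $q_y > q_{y'}$), and then multiplying over all pairs using $|A| = |X|$; this argument has no single-swap analogue, because no individual swap of $y$ for $y'$ inside $T \setminus \{b\}$ is claimed (or known) to preserve optimality. So to complete your proof you would need either to discover the induction-plus-block-exchange structure anyway, or to prove a single-swap exchange property that the paper itself never establishes and that does not follow from the theorem.
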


\begin{proof}
We proceed by induction on $j$, the base case $j=1$ being simple.
Let $S$ be some solution in $\opt(C,j-1)$, and let $S'$ be the solution in $\opt(C,j)$ with the most bidders in common with $S$.
We will refer to an ad being ``above'' another ad if it has higher a-ecpm.
Let $x$ be the highest bidder in $S'$.

If $x$ does not appear in $S$, then we claim that the solution $(x,
S)$ must be in $\opt(C,j)$: Once $x$ is chosen for $S'$, taking any
set in $\opt(C \setminus x, j-1)$ for the remaining positions will
result in an optimal solution; the set $S$ is such a set, since by
assumption it does not include $x$, and is in $\opt(C, j-1)$.  But if
$(x,S) \in \opt(C,j)$ the theorem is proven, so we may assume $x \in
S$.

Let $A$ be the set of ads in $S$ above $x$, and so we can write $S =
(A, x, Q)$ and $S' = (x, Q')$.  We claim $Q' \supset Q$.  To see this,
consider the set $L$ of all ads that have lower a-ecpm than
$x$.  By the optimality of $S$, we have $Q \in \opt(L, j')$ for $j' =
j - |A| - 1 < j$. By induction there is a $Q'' \in \opt(L,j-1)$ where
$Q'' \supset Q$.  Thus we must have $Q' \supset Q$ since $S'$ is the
solution with the most bidders in common with $S$.
Decompose $Q'$ as  $Q' = (B + X, D, z, E)$ where 
\begin{itemize}
\item
$z$ is the lowest ad in $S'$ that does not appear in $S$, 
\item
$E$ is the set of ads below $z$ in $S'$ (this can be empty),
\item
$D$ is the maximal set of ads immediately above $z$ in $S'$ that also appear in $S$ (this can be empty),
\item
$X$ are the remaining ads in $S'$ that do not appear in $S$,
\item
$B$ are the remaining ads (besides $x$) that appear in both $S$ and $S'$.
\end{itemize}
Let $B' = B \cup x$.
Note that by the definitions above we may write $S = (A, B', D, E)$.
We have $\eff(S) = \eff(A, B', D, E) \geq \eff(B' + X, D, E)$ since $S \in \opt(C, j-1)$ and $|(B' + X, D, E)| = j-1$.  Decomposing this a bit gives
\begin{multline}
\label{eq1}
\eff(A, B', D) + q(A) q(B') q(D) \eff(E) \\ \geq \eff(B' + X) + q(B') q(X) \eff(D, E).
\end{multline}
We also note that
\begin{eqnarray}
\label{eq2}
\eff(S') & = & \eff(B' + X, D, z, E) \nonumber \\ & = & \eff(B' + X) + q(B') q(X) \eff(D, z, E).
\end{eqnarray}

Let $S'' = (A, B', D, z, E)$.  The remainder of the proof will show
that $\eff(S'') \geq \eff(S')$, which implies the theorem since $|S''|
= j$, $S'' \supset S$ and $S' \in \opt(C, j)$.  We can rewrite
$\eff(S'')$ as follows:
\begin{eqnarray}
\eff(S'') & = & \eff(A, B', D) + q(A) q(B') q(D) \eff(z, E) \nonumber \\
& \geq & \eff(B' + X) + q(B') q(X) \eff(D, E)  \nonumber \\
& & +~q(A) q(B') q(D) (\eff(z,E) - \eff(E))  \label{eq:sppline1} \\
& = & \eff(S') - q(B') q(X) (\eff(D, E) - \eff(D, z, E)) \nonumber \\ 
&& +~q(A) q(B') q(D) (\eff(z,E) - \eff(E)) \label{eq:sppline2} 
\end{eqnarray}
In the above,~\eqref{eq:sppline1} follows from~\eqref{eq1},
and~\eqref{eq:sppline2} follows from~\eqref{eq2}.
  Rearranging, and
using $$\eff(D, E) - \eff(D, z, E) = q(D)(\eff(E) - \eff(z, E)),$$ we get
$$
\eff(S'') - \eff(S') \!=\! q(B') q(D) (q(A) - q(X))(\eff(z, E) - \eff(E)).
$$

We know that $\eff(z, E) \geq \eff(E)$ since otherwise $\eff(B' + X,
D, E) > \eff(B' + X, D, z, E) = \eff(S')$, and this cannot be since
$S' \in \opt(C, j)$.  We claim that $q(A) \geq q(X)$, which would
imply $\eff(S'') \geq \eff(S')$ and thus complete the proof.  
This
 is trivially true if $A = X = \emptyset$.  Since $|A| = |X|$ by the definitions above, 
we can assume both $A$ and $X$ are non-empty.
consider some $y
\in A$ and ${y'} \in X$.  We have $y \notin S'$ by the definition of $A$.  Since $S'$ is the solution in $\opt(C, j)$
with the most bidders in common with $S'$, we must not be able to
substitute $y$ for ${y'}$ in $S$, and thus by Lemma~\ref{lemma:sub}
we must have that $\eff_{y'} > \eff_y$ or $\aecpm_{y'} > \aecpm_{y}$.  But by the definitions of $A$ and $X$, we have
$\aecpm_y \geq \aecpm_x \geq \aecpm_{y'}$.  Therefore $\eff_{y'} >
\eff_y$.  The previous two inequalities imply $q_y > q_{y'}$.  Since
$y$ and ${y'}$ were arbitrary and $|A| = |X|$, this gives $q(A) > q(X)$.
\end{proof}

\subsection{Monotonicity of Position and Click Probability}

In this section we give our main theorem regarding the structure of
the optimal assignments in the Markovian click model: that position
and click probability are monotonic in a bidder's bid, with all other
bids fixed.  This is a fundamental property that makes the bidder's
interaction with the system intuitive, and allows the bidder to
adjust her bid intelligently without global knowledge of the other
bids.

\begin{theorem}
\label{thm:monotonic}
With all other bids fixed, the probability of receiving a click in the
optimal solution is non-decreasing in one's bid.  In addition, the
position of a particular bidder in the optimal solution is monotonic
that bidder's bid.
\end{theorem}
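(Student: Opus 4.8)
The plan is to fix all bids except that of a single bidder $t$ and compare two bid values $b < b'$. Writing $\eff_t = \ctr_t b_t$ and $\aecpm_t = \eff_t/(1-q_t)$, note that raising the bid raises $\eff_t$ and (since $q_t$ is fixed) raises $\aecpm_t$ by the same factor, while leaving every other bidder's $\eff$ and $\aecpm$ untouched. Let $\sigma$ be an optimal assignment at bid $b$ and $\sigma'$ one at bid $b'$; by Theorem~\ref{thm:rank} each is sorted by a-ecpm, so it is determined by its set of placed bidders together with where $t$ sits in the a-ecpm order. I would prove both conclusions through a single structural claim: \emph{the set of bidders placed above $t$ at bid $b'$ is a subset of the set placed above $t$ at bid $b$}, for suitably chosen optimal $\sigma,\sigma'$. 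This claim yields position monotonicity because the number of ads above $t$ cannot increase, and it yields click-probability monotonicity because the click probability of $t$ is $\ctr_t \cdot q(A)$ where $A$ is the set above $t$, and deleting factors $q_i \in [0,1]$ from the product $q(A)=\prod_{i\in A}q_i$ only increases it.

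First I would establish the easy half, \emph{inclusion monotonicity}: if $t$ is placed at bid $b$ then $t$ is placed at bid $b'$. The argument is a direct efficiency comparison. Any assignment that omits $t$ has exactly the same efficiency at bids $b$ and $b'$, since $t$'s parameters never appear in it. On the other hand, raising $\eff_t$ with $q_t$ fixed can only increase the efficiency of $\sigma$, because $t$ contributes a term $q(A)\cdot\eff_t$ with nonnegative coefficient and its $q_t$ acting on the ads below is unchanged. Hence if some optimal $\sigma'$ at $b'$ omitted $t$, its efficiency would be unchanged between the two bids and at most that of $\sigma$ at bid $b$, which in turn is at most the efficiency of $\sigma$ at bid $b'$; thus $\sigma$, which contains $t$, is also optimal at $b'$, and the inequality is strict whenever $q(A)>0$, forcing every optimum at $b'$ to contain $t$. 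This pins down the unassigned-to-assigned transition and reduces everything to the case where $t$ is placed at both bids.

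It then remains to prove the subset claim in that case. Here I would partition the other bidders by a-ecpm relative to $t$: let $H$ be those with a-ecpm at least $\aecpm_t$ evaluated at the \emph{higher} bid, let $M$ be those $t$ \emph{overtakes} in a-ecpm order as the bid rises from $b$ to $b'$, and let $L$ be the rest. At bid $b$ the ads eligible to sit above $t$ are $H\cup M$, whereas at bid $b'$ only $H$ remains eligible, since $t$ has passed all of $M$. Any ad from the $M$ group that drops below $t$ therefore already helps the subset claim; the real content is that raising $\eff_t$ does not push \emph{more} of the high group $H$ above $t$. I would prove this by viewing the ads above $t$ as an optimal sub-assignment of $H$-bidders into the top slots and using Theorem~\ref{thm:subsets} to relate these sub-assignments across the two bids, together with the dominance Theorem~\ref{lemma:sub} to rule out any configuration in which an overtaken or lower bidder profitably displaces the now-more-valuable $t$ downward.

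The main obstacle, exactly as flagged in the introduction, is this last step: controlling how the high-a-ecpm bidders in $H$ compete for the top slots as $\eff_t$ grows. The difficulty is that the placed sets at the two bids can overlap in complicated ways, so a naive exchange argument cannot simply splice "the part above $t$" of one solution into the other without creating duplicates or violating the a-ecpm order. Taming this seems to require the nested optimal-substructure guaranteed by Theorem~\ref{thm:subsets}, to obtain comparable optimal solutions for the relevant numbers of top slots, together with repeated appeals to Theorem~\ref{lemma:sub}, to show that an overtaken or lower bidder can never sit above the more valuable $t$. I expect this combinatorial bookkeeping, rather than any single inequality, to be the crux of the proof.
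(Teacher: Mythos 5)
Your plan rests on a structural claim that is false: it is not true, even for suitably chosen optima, that the set of bidders placed above $t$ at the higher bid is a subset of the set placed above $t$ at the lower bid. Concretely, take two slots and three bidders: $y$ with $\eff_y = 2$, $q_y = 0.3$ (so $\aecpm_y = 2/0.7 \approx 2.86$), $y'$ with $\eff_{y'} = 1$, $q_{y'} = 0.7$ (so $\aecpm_{y'} = 10/3 \approx 3.33$), and $t$ with $q_t = 0.01$ and $\eff_t$ variable. At $\eff_t = 2.4$ the sorted candidate solutions have values $\eff(y,t) = 2.72$, $\eff(y',t) = 2.68$, $\eff(y',y) = 2.4$, so the unique optimum is $(y,t)$; at $\eff_t = 2.6$ they have values $\eff(y,t) = 2.78$, $\eff(y',t) = 2.82$, $\eff(y',y) = 2.4$, so the unique optimum is $(y',t)$. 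In both cases $\aecpm_t < \aecpm_y < \aecpm_{y'}$, so these orderings are the ones Theorem~\ref{thm:rank} prescribes, and there is no freedom in choosing the optima. The set above $t$ jumps from $\{y\}$ to $\{y'\}$: neither contains the other. The click probability does rise ($0.3\,\ctr_t$ to $0.7\,\ctr_t$), but not because factors were deleted from the product $q(A)$; the whole set above $t$ is exchanged for a different one with larger $q$. This is exactly why the paper does not derive click monotonicity from any set-nesting property. Instead it observes that the value of any fixed solution $(A,x,B)$ is linear in $b_x$ with slope $q(A)\ctr_x$, which is precisely $x$'s click probability in that solution; the optimum is therefore an upper envelope of linear functions of $b_x$, and along such an envelope the slope of the active piece is non-decreasing. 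That one-paragraph convexity argument settles the first claim with no combinatorics at all.

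The second gap is that even where your plan targets the right statement (position monotonicity), the argument is not actually supplied: you name Theorem~\ref{thm:subsets} and Theorem~\ref{lemma:sub} as the tools and then defer the ``combinatorial bookkeeping,'' which is in fact the entire content of the proof. Your instinct about which lemmas matter is correct, but the paper's proof needs more than those two results: it takes consecutive optima $S_i = (F,x,G)$ and $S_{i+1} = (A,x,E)$ at the critical bid where both are optimal, and crucially imports the inequality $q(A) > q(F)$ from the envelope argument of part one --- an ingredient your ordering of the two parts forfeits, since you attempt to prove the positional/set structure first and deduce click monotonicity from it. It then uses tie-breaking to establish $A \cap G = F \cap E = \emptyset$, invokes Theorem~\ref{thm:subsets} to replace $G$ by a superset $G'$ of $E$ without loss of optimality, and closes with an exchange argument over nine auxiliary assignments whose efficiencies are chained into a cycle of inequalities ending in a strict contradiction. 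So both halves of the proposal need replacement: the first by the envelope argument (or something equally global), and the second by an explicit exchange argument rather than a pointer to the lemmas that would appear in one.
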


\begin{proof}
As bidder $x$ increases her bid $b_x$ with all other bids fixed, the
value of a particular solution $S = (A,x,B)$ increases linearly as
$q(A) \ctr_x \cdot b_x + [\eff(A) + q(A) q_x \eff(B)]$.  (Solutions
not involving $x$ stay constant.)

Let $S_1, \dots, S_k$ denote the sequence of optimal solutions that
occurs as $b_x$ increases from $0$.  Solution $S_1$ is the best
solution not involving $x$, and $S_k$ is the best solution that puts
$x$ in the first position.  By the fact that each solution increases
linearly by the term $q(A) \ctr_x \cdot b_x$, which is the probability that $x$
receives a click in that solution, it must be the case that for a new
solution to become optimal it gives $x$ a higher click probability
than in the previous solution; i.e., for all $i \geq 0$, $S_{i+1}$
gives $x$ a higher click probability than $S_{i}$.  This proves the
first part of the theorem.

Now suppose the second part of the theorem is false.  Then, there must
be some consecutive solutions $S_i$ and $S_{i+1}$ where $x$ has a
higher position in $S_i$ than in $S_{i+1}$.  Let $b$ be the bid that
makes both $S_i$ and $S_{i+1}$ optimal, which must exist since they
are consecutive in the list of optimal solutions, and fix $b$ for the
remainder of the proof.  Decompose the two solutions as $S_{i+1} = (A,
x, E)$ and $S_i = (F, x, G)$ where $|A| > |F|$ by assumption, and
$q(A) > q(F)$ by the argument that proved the first part of the
theorem.  Since both solutions are optimal, they are both sorted by
a-ecpm, with ties broken lexicographically.

We claim that $A \cap G = \emptyset$. If this were not the case then
some bidder $y$ would appear in both $A$ and $G$, but since both
$S_i$ and $S_{i+1}$ are sorted by ecpm, and $y$ appears on different
sides of $x$, this must mean that $y$ has the same a-ecpm as $x$.
But, this violates our assumption on how the algorithm breaks ties
among different orderings of the same set.  Using similar logic, we
get $F \cap E = \emptyset$.

By the optimality of $S_{i+1}$ and the fact that $F \cap E =
\emptyset$, we get $ E \in \opt({\cal B} \setminus (F \cup A \cup
\{x\}), n - |A| - 1) $.  Since $|F|<|A|$, Theorem~\ref{thm:subsets}
then implies that there is some $ G' \in \opt({\cal B} \setminus (F
\cup A \cup \{x\}), n - |F| - 1) $ where $G' \supset E$.  Even if $G'
\neq G$, the set $G'$ could replace $G$ in $S_i$ (since $A \cap G = \emptyset$) and still be optimal,
and so we define $S'_i = (F, x, G')$ and have $\eff(S'_i) =
\eff(S_i)$.  Note that by the definition of $G'$, we have $A \cap G' = \emptyset$.

Since $A \cap G' = \emptyset$ and $|A| > |F|$, there must be some
bidder in $A$ that does not appear in $S'_i$.  Let $a$ be the first
such bidder (by a-ecpm).  Decompose $A$ into $(C, a, D)$ where $C$ and
$D$ are those bidders with higher and lower a-ecpm than $a$,
respectively.  Note that $C \subseteq F$, by the definition of
$a$. Let $F_1$ be the smallest prefix of $F$ (by a-ecpm) that contains
all of $C$, and let $F_2 = F - F_1$.

Similarly, since $A \cap G' = \emptyset$ and $|G| > |E|$, we let $t \in
G'$ be the first bidder (by a-ecpm) that does not appear in $S_{i+1}$.  Let
$G'_1$ be the bidders in $G'$ with higher a-ecpm than $t$, and $G'_2 = G'
- G'_1 - t$.  Note $G'_1$ is also a prefix of $E$, and let $E_2 = E - G'_1$.

Given these definitions, we define nine different solutions that we
will use in our proof (renaming $S'_i$ and $S_{i+1}$ for clarity):

\begin{eqnarray*}
\one &=& S_{i+1} = (A, x, E) = (C, a, D, x, E)\\
\one' &=& (C, D, x, E) = (C, D, x, G'_1, E_2)\\
\two &=& S'_i = (F, x, G') = (F, x, G'_1, t, G'_2)\\
\two' &=& (F, x, G'_1, G'_2)\\
\three &=& (C, D, x, E+t) = (C, D, x, G'_1, t, E_2) \\
\four &=& (F_1, a, F_2, x, G'_1, G'_2)\\
\five &=& (C, a, F_1 - C, F_2, x, G'_1, G'_2)\\
\five' &=& (C, F_1 - C, F_2, x, G'_1, G'_2)\\
\six &=& (C, F_1 - C, a, F_2, x, G'_1, G'_2)
\end{eqnarray*}

In the following claims we will often use the generic identity 
\begin{equation}
\label{eq:identity}
\eff(X, y, Y) - \eff(X, Y) = q(X)(\eff_y - (1-q_y)\eff(Y))
\end{equation}

\begin{claim}
\label{claim:scomp1}
$\eff(\three) - \eff(\one') > \eff(\two) - \eff(\two')$
\end{claim}

\begin{proof}
Using~\eqref{eq:identity}, we can rewrite the claim as
\begin{multline}
q(C,D,x,G'_1)(\eff_t - (1 - q_t)\eff(E_2)) \\
> 
q(F,x,G'_1)(\eff_t - (1 - q_t)\eff(G'_2)).
\end{multline}
Since $q(C,D,x) \geq q(C,a,D,x) = q(A,x)$, and $q(F) < q(A)$, we get $q(C,D,x,G'_1) > q(F,x,G'_1)$, and so it remains to
prove $\eff(E_2) \leq \eff(G'_2)$.  But this follows from the optimality of
$\two$, since $E_2$ could replace $G'_2$ in solution $\two$ (indeed, $E_2
\subset G'_2$).
\end{proof}

\begin{claim}
\label{claim:scomp2}
$\eff(\five) - \eff(\five') \geq \eff(\one) - \eff(\one')$
\end{claim}
\begin{proof}
Using~\eqref{eq:identity}, we can rewrite the claim as
\begin{multline}
q(C)(\eff_a - (1-q_a)) \eff(F_1 - C, F_2, x, G'_1, G'_2)) \\ \geq
q(C)(\eff_a - (1-q_a) \eff(D,x,E)).
\end{multline}
Since $(F_1 - C, F_2, x, G'_1, G'_2)$ has the same length as
$(D,x,E)$, and does not contain $a$ or any bidders in $C$, it could
replace $(D,x,E)$ in $\one$; but since $\one$ is optimal, we may
conclude that $\eff(D,x,E) \geq \eff(F_1 - C, F_2, x, G'_1, G'_2)$, which proves the claim.
\end{proof}

Note that $\eff(\six) \geq \eff(\five)$, since all bidders in $F_1$ have a
higher a-ecpm than $a$.  Finally, note that a simple application
of~\eqref{eq:identity} gives
\begin{equation}
\label{eq:scomp3}
\eff(\six) - \eff(\five') = \eff(\four) - \eff(\two')
\end{equation}
We now conclude the proof with the following contradiction:

\medskip

\begin{tabular}{rcll}
$\eff(\one) - \eff(\one')$ 
& \!$\geq$\! & $\eff(\three) - \eff(\one')$ & (by opt. of $\one$) \\ 
& \!$>$\!    & $\eff(\two) - \eff(\two')$ & (Claim~\ref{claim:scomp1}) \\
& \!$\geq$\! & $\eff(\four) - \eff(\two')$ & (by opt. of $\two$) \\
& \!$=$\!    & $\eff(\six) - \eff(\five')$ & (by~\eqref{eq:scomp3}) \\
& \!$\geq$\! & $\eff(\five) - \eff(\five')$ &  \\
& \!$\geq$\! & $\eff(\one) - \eff(\one')$ & (by Claim~\ref{claim:scomp2})
\end{tabular}

\end{proof}

\section{Computing the Optimal Assignment}
\label{sec:algorithm}

In this section we give algorithms for computing the optimal
assignment of bidders to positions using the structural properties
we proved in the previous section. We begin with a simple dynamic
program that gives an $O(n \log n + nk)$ time algorithm.  We then show how our
insights from the previous sections give a faster $O(n\log n + k^2
\log^2 n)$ time algorithm.

\subsection{Optimal Assignment using Dynamic Programming}
The algorithm proceeds as follows.  First, sort the ads in decreasing
order of a-ecpm in time $O(n \log n)$.  Then, let $F(i,j)$ be
the efficiency obtained (given that you reach slot $j$) by filling
slots $(j, \dots, k)$ with bidders from the set $\{i, \dots, n\}$.  We
get the following recurrence:
$$
F(i,j) = \max (F(i+1, j+1) q_i + \eff_i, F(i+1, j)).
$$ 
Solving this recurrence for $F(1,1)$ yields the optimal assignment,
and can be done in $O(nk)$ time.

\subsection{Near-linear Time Algorithm}
Let ${\cal B} = \{1, \dots, n\}$ be the set of bidders, sorted by
a-ecpm.  Suppose we had an oracle that told us, for any $j,j' \in
{\cal B}$, the bidder $y$ with $j \leq y \leq j'$ that maximizes
$f(q_y, \eff_y)$ for an arbitrary linear function $f$.  We will later
show how to construct this oracle, but first we describe our algorithm
that uses this oracle.

Our algorithm will construct a solution $S_i \in \opt({\cal B}, i)$
for all $i = 1, \dots, k$, the final one $S_k$ being the overall
optimum.  By Theorem~\ref{thm:subsets}, we may assume that $S_{i+1}
\supset S_i$.  Using this fact, our algorithm builds $S_{i+1}$ from
$S_i$ by simply finding $\argmax_{x \notin S_i} \eff(S_i \cup \{x\})$.
To perform this max, the algorithm first guesses (i.e., searches
exhaustively for) the a-ecpm rank of the new bidder $x$ among the bidders
in $S_i$; this a number $\ell$ from $1$ to $i+1$.
Let $\{s_1, \dots, s_i\}$ be the elements of $S_i$ sorted by
decreasing a-ecpm.  The new bidder $x$ has a-ecpm between $s_{\ell-1}$ and $s_{\ell}$, and so
$
\eff(S_i \cup \{x\}) = \eff(s_1, \dots, s_{\ell-1}) + q(s_1, \dots, s_{\ell-1}) (\eff_x + q_x \eff(s_{\ell}, \dots, s_i))
$.
Since $\eff(S_i \cup \{x\})$ is linear in $(q_x, \eff_x)$, we 
may appeal to the oracle to find
the bidder $x$ that maximizes $\eff(S_i \cup \{x\})$ among all bidders
with a-ecpm between that of $s_{\ell-1}$ and $s_\ell$.  We make $i+1$ calls to
this oracle for each $i$, and thus $O(k^2)$ calls overall.  To get the
coefficients of $\eff(S_i \cup \{x\})$ to pass to the oracle, we
precompute the quantities $q(s_1, \dots, s_p)$ and $\eff(s_p, \dots,
s_i)$ for all $p$.  (This can be done in $O(k)$ time per $i$,
for $O(k^2)$ time overall.)

It remains to show how to implement the oracle.  We first preprocess
the sequence $[1, \dots, n]$ of bidders as follows.  We consider the
{\em dyadic} intervals $[\alpha 2^\beta + 1, \dots,
(\alpha+1)2^\beta]$ for each possible $\alpha, \beta$, for a total of
$O(n)$ intervals.  Note that any subsequence $[j,\dots,j']$ is made up
of at most $O(\log n)$ such intervals.  For each such interval, we
will make a data structure that can find $\max f(q_x, \eff_x)$ over
bidders in that interval in $O(\log n)$ time.  So overall, given $[j,
\dots, j']$, the oracle takes the max of $O(\log n)$ calls to the data
structure, and completes in $O(\log^2 n)$ time.

The data structure we compute for a particular interval $[\alpha
2^\beta + 1, \dots, (\alpha+1)2^\beta]$ is simply the convex hull of
the points $(q_x, \eff_x)$ in two-dimensional space defined by bidders
$x$ in the interval.  We can compute all these convex hulls in $O(n
\log n)$ time by successively merging convex hulls for increasing
$\beta$.  Given the convex hull (with the points sorted in order of
$q_x$ for example), a simple binary search can find the point
maximizing $f(q_x, \eff_x)$ in $O(\log n)$ time.

This gives,

\begin{theorem}
Consider the auction with $n$ Markovian bidders and $k$ slots. 
There is an optimal assignment which can be determined in $O(n\log n + k^2 \log^2 n)$ time.
\end{theorem}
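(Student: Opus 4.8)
The plan is to verify that the incremental algorithm described above is correct and that its running time sums to $O(n\log n + k^2\log^2 n)$. Correctness is the only part that requires the structural theorems; the rest is bookkeeping over the phases of the algorithm. I would organize the argument as: (i) an induction establishing that each $S_i$ produced is optimal, (ii) a check that the per-round maximization is computed correctly by the oracle, and (iii) a phase-by-phase time accounting.

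First I would prove, by induction on $i$, that $S_i \in \opt(\bidders, i)$ for all $i=1,\dots,k$, so that $S_k$ is the global optimum. The base case is immediate: $S_1$ is the single bidder maximizing $\eff(\{x\}) = \eff_x$. For the inductive step, assume $S_i \in \opt(\bidders, i)$. Applying Theorem~\ref{thm:subsets} with $C = \bidders$ and $j = i+1$ to this specific $S_i$ yields some $S^\star \in \opt(\bidders, i+1)$ with $S^\star \supset S_i$; since $|S^\star| = |S_i| + 1$, this $S^\star$ is a single-bidder extension, $S^\star = S_i \cup \{x^\star\}$. Hence
$$
\max_{x \notin S_i} \eff(S_i \cup \{x\}) \;\geq\; \eff(S_i \cup \{x^\star\}) \;=\; \eff(S^\star),
$$
so the set $S_{i+1} = \argmax_{x \notin S_i} \eff(S_i \cup \{x\})$ chosen by the algorithm attains the optimal value and lies in $\opt(\bidders, i+1)$, where Theorem~\ref{thm:rank} guarantees that the a-ecpm-sorted placement of $S_{i+1}$ indeed realizes $\eff(S_{i+1})$. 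This is the conceptual core, and the only place the structural results are invoked.

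Next I would confirm that the per-round maximization is implemented correctly. Fixing the a-ecpm rank $\ell$ of the inserted bidder turns $\eff(S_i \cup \{x\})$ into an explicit linear function of $(q_x, \eff_x)$ with nonnegative coefficients $q(s_1,\dots,s_{\ell-1})$ and $q(s_1,\dots,s_{\ell-1})\eff(s_\ell,\dots,s_i)$, which are obtained from the precomputed prefix continuation products and suffix efficiencies. The bidders eligible for rank $\ell$ are exactly those in a contiguous interval of the globally a-ecpm-sorted list (and are automatically outside $S_i$), so a single oracle call returns the best candidate for that $\ell$; taking the maximum over the $i+1$ values of $\ell$ gives the true maximizer over all $x \notin S_i$.

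Finally I would total the running time. Sorting by a-ecpm costs $O(n\log n)$, and building the convex hulls over all $O(n)$ dyadic intervals costs $O(n\log n)$ by bottom-up merging across the $O(\log n)$ levels. Each round $i$ makes $O(k)$ oracle calls, each decomposing $[j,\dots,j']$ into $O(\log n)$ dyadic intervals and binary-searching each hull in $O(\log n)$, for $O(\log^2 n)$ per call; together with the $O(k)$ work to refresh the prefix/suffix quantities, this is $O(k\log^2 n)$ per round, hence $O(k^2\log^2 n)$ over all $k$ rounds. Summing the preprocessing and the main loop gives the claimed $O(n\log n + k^2\log^2 n)$. The main obstacle, such as it is, lies in the correctness induction: one must use the fact that Theorem~\ref{thm:subsets} is \emph{universally} quantified over optimal $(i)$-solutions, so it applies to the exact $S_i$ the algorithm is holding rather than merely to some optimal $i$-solution; with that observation the induction closes cleanly.
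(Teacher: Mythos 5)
Your proposal is correct and takes essentially the same route as the paper: the paper likewise builds $S_{i+1}$ from $S_i$ by the greedy single-bidder extension justified by Theorem~\ref{thm:subsets}, implements each round's maximization by guessing the a-ecpm rank $\ell$ and querying a linear-function oracle realized by convex hulls over dyadic intervals, and tallies the running time exactly as you do. Your explicit induction---in particular the observation that Theorem~\ref{thm:subsets} is universally quantified over sets in $\opt(\bidders,i)$ and therefore applies to the specific $S_i$ the algorithm holds, with Theorem~\ref{thm:rank} certifying that the sorted placement realizes $\eff(S_{i+1})$---simply makes precise what the paper leaves implicit.
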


It follows that using VCG pricing with this optimal assignment, we obtain a truthful
mechanism for sponsored search with Markovian users. 

\section{Concluding Remarks}
\label{sec:conclusions}

We approached sponsored search auctions as a three party process by
modeling the behavior of users first and then designing suitable
mechanisms to affect the game theory between the advertiser and the
search engine.  This formal approach shows an intricate connection between 
the user models and the mechanisms.

There are some interesting open issues to understand about our model
and mechanism.  For example, in order to implement our mechanism, the
search engine needs to devise methods to estimate the parameters of
our model, in particular, $q_i$'s.  This is a challenging statistical
and machine learning problem.  Also, we could ask how much improvement
in efficiency and/or revenue is gained by using our model as opposed
to VCG without using our model. 

More powerful models will also be of great interest.  One small
extension of our model is to make the continuation probability $q_i$ a
function of location as well, which makes the optimization problem
more difficult.  We can also generalize the Markov model to handle
arbitrary configurations of ads on a web page (not necessarily a
search results page), or to allow various other user states (such as
navigating a landing page).  Finally, since page layout can be
performed dynamically, we could ask what would happen if the layout of
a web page were a part of the mechanism; i.e., a function of the bids.

\bibliographystyle{plain}
\bibliography{sponsored_search}

\begin{thebibliography}{10}

\bibitem{AG}
Z.~Abrams and A.~Ghosh.
\newblock Auctions with revenue guarantees for sponsored search.
\newblock In {\em Proc. Workshop on Internet and Network Economics (WINE)},
  2007.

\bibitem{AMT}
Z.~Abrams, O.~Mendelevitch, and J.~Tomlin.
\newblock Optimal delivery of sponsored search advertisements subject to budget
  constraints.
\newblock In {\em ACM Conference on Electronic Commerce}, 2007.

\bibitem{afm}
G.~Aggarwal, J.~Feldman, and S.~Muthukrishnan.
\newblock Bidding to the top: {VCG} and equilibria of position-based auctions.
\newblock In {\em \em Proc. Workshop on Approximation and Online Algorithms
  (WAOA)}, 2006.

\bibitem{AGM}
G.~Aggarwal, A.~Goel, and R.~Motwani.
\newblock Truthful auctions for pricing search keywords.
\newblock In {\em ACM Conference on Electronic Commerce (EC)}, 2006.

\bibitem{BCEIJM}
C.~Borgs, J.~Chayes, O.~Etesami, N.~Immorlica, K.~Jain, and M.~Mahdian.
\newblock Dynamics of bid optimization in online advertisement auctions.
\newblock In {\em Proc. WWW}, 2007.

\bibitem{BCIMS}
C.~Borgs, J.~T. Chayes, N.~Immorlica, M.~Mahdian, and A.~Saberi.
\newblock Multi-unit auctions with budget-constrained bidders.
\newblock In {\em ACM Conference on Electronic Commerce (EC)}, 2005.

\bibitem{CDEGHKMS}
M.~Cary, A.~Das, B.~Edelman, I.~Giotis, K.~Heimerl, A.~R. Karlin, C.~Mathieu,
  and M.~Schwarz.
\newblock Greedy bidding strategies for keyword auctions.
\newblock In {\em ACM Conference on Electronic Commerce}, 2007.

\bibitem{C}
E.~Clarke.
\newblock Multipart pricing of public goods.
\newblock {\em Public Choice}, 11:17--33, 1971.

\bibitem{EOS}
B.~Edelman, M.~Ostrovsky, and M.~Schwarz.
\newblock Internet advertising and the generalized second price auction:
  Selling billions of dollars worth of keywords.
\newblock In {\em \em Second workshop on sponsored search auctions}, 2006.

\bibitem{FMPS}
J.~Feldman, S.~Muthukrishnan, M.~P\'al, and C.~Stein.
\newblock Budget optimization in search-based advertising auctions.
\newblock In {\em ACM Conference on Electronic Commerce}, 2007.

\bibitem{FMNP}
J.~Feldman, E.~Nikolova, S.~Muthukrishnan, and M.~P\'al.
\newblock A truthful mechanism for offline ad slot scheduling.
\newblock In {\em Proc. Symposium on Algorithmic Game Theory (SAGT)}, 2008.
\newblock To appear.

\bibitem{G}
T.~Groves.
\newblock Incentives in teams.
\newblock {\em \em Econometrica}, 41(4):617--631, 1973.

\bibitem{Lahaie}
S.~Lahaie.
\newblock An analysis of alternative slot auction designs for sponsored search.
\newblock In {\em ACM Conference on Electronic Commerce (EC)}, 2006.

\bibitem{LPSV}
S.~Lahie, D.~Pennock, A.~Saberi, and R.~Vohra.
\newblock {\em Sponsored Search Auctions, in: Algorithmic Game Theory}, pages
  699--716.
\newblock Cambirdge University Press, 2007.

\bibitem{LQ}
Li~Liang and Qi~Qi.
\newblock Cooperative or vindictive: Bidding strategies in sponsored search
  auction.
\newblock In {\em WINE}, pages 167--178, 2007.

\bibitem{MNS}
M.~Mahdian, H.~Nazerzadeh, and A.~Saberi.
\newblock Allocating online advertisement space with unreliable estimates.
\newblock In {\em ACM Conference on Electronic Commerce}, 2007.

\bibitem{MS}
M.~Mahdian and A.~Saberi.
\newblock Multi-unit auctions with unknown supply.
\newblock In {\em ACM conference on Electronic commerce (EC)}, 2006.

\bibitem{MSVV}
A.~Mehta, A.~Saberi, U.~Vazirani, and V.~Vazirani.
\newblock Ad{W}ords and generalized online matching.
\newblock In {\em FOCS}, 2005.

\bibitem{mps}
S.~Muthukrishnan, M.~P\'al, and Z.~Svitkina.
\newblock Stochastic models for budget optimization in search-based
  advertising.
\newblock In {\em Proc. Workshop on Internet and Network Economics (WINE)},
  2007.

\bibitem{RW}
P.~Rusmevichientong and D.~Williamson.
\newblock An adaptive algorithm for selecting profitable keywords for
  search-based advertising services.
\newblock In {\em Proc. 7th ACM conference on Electronic commerce}, 2006.

\bibitem{SL}
B.~K. Szymanski and J.-S. Lee.
\newblock Impact of roi on bidding and revenue in sponsored search
  advertisement auctions.
\newblock In {\em 2nd Workshop on Sponsored Search Auctions in ACM Conference
  on Electronic Commerce (EC)}, 2006.

\bibitem{Varian}
H.~Varian.
\newblock Position auctions.
\newblock {\em International Journal of Industrial Organization},
  25(6):1163--1178, December 2007.

\bibitem{V}
W.~Vickrey.
\newblock Counterspeculation, auctions and competitive-sealed tenders.
\newblock {\em Finance}, 16(1):8--37, 1961.

\bibitem{VR}
Y.~Vorobeychik and D.~M. Reeves.
\newblock Equilibrium analysis of dynamic bidding in sponsored search auctions.
\newblock In {\em Proc. Workshop on Internet and Network Economics (WINE)},
  2007.

\bibitem{WVLL}
J.~Wortman, Y.~Vorobeychik, L.~Li, and J.~Langford.
\newblock Maintaining equilibria during exploration in sponsored search
  auctions.
\newblock In {\em Proc. Workshop on Internet and Network Economics (WINE)},
  2007.

\end{thebibliography}

\end{document}